\pgfplotsset{compat=newest}
\newtheorem{definition}{Definition}%[section]
\newtheorem{theorem}{Theorem}%[section]
\newtheorem*{remark}{Remark}
\newtheorem{problem}{Problem}
\def\ve#1{{\mathchoice{\mbox{\boldmath$\displaystyle #1$}}%
              {\mbox{\boldmath$\textstyle #1$}}%
              {\mbox{\boldmath$\scriptstyle #1$}}%
              {\mbox{\boldmath$\scriptscriptstyle #1$}}}}
\newcommand{\F}{\ensuremath{\mathbb{F}}}
\newcommand{\Fq}{\ensuremath{\mathbb{F}_q}}
\newcommand{\Fqm}{\ensuremath{\mathbb{F}_{q^m}}}
\newcommand{\tpub}{\ensuremath{t_{\mathsf{pub}}}}
\newcommand{\code}{\ensuremath{\mathcal{C}}}
\renewcommand{\a}{\ve{a}}
\renewcommand{\c}{\ve{c}}
\newcommand{\M}{\ve{M}}
\newcommand{\floor}[1]{\ensuremath{\left\lfloor{#1}\right\rfloor}}
\newcommand{\set}[1]{\ensuremath{\mathcal{#1}}}
\newcommand{\GF}[2]{\ensuremath{\mathbb{#1}_{#2}}}
\newcommand{\Gpub}{\ensuremath{\ve{G}_\mathsf{pub}}}
\newcommand{\CE}{\ensuremath{\mathcal{C}_\text{E}}}
\newcommand{\wt}{\textrm{wt}}
\newcommand{\supp}{\text{supp}}
\newcommand{\tikznode}[2]{%
	\ifmmode%
	\tikz[remember picture,baseline=(#1.base),inner sep=0pt] \node (#1) {$#2$};
	\else
	\tikz[remember picture,baseline=(#1.base),inner sep=0pt] \node (#1) {#2};%
	\fi}
\tikzset{%
	mybox_block/.style={rectangle,rounded corners,draw=black, thick,text width=1em,minimum height=2em,minimum width=4.75em,text centered},                  
	[highlight/.style={rectangle,rounded corners,fill=#1!15,draw,fill opacity=0.5,thick,inner sep=0pt},
	highlight/.default=gray],
	plot1/.style = {thick,
		dotted,
		mark=+}
}
\newif\ifcomment
\newcommand{\dE}{d_\mathrm{E}}
\newcommand{\E}{\ve{E}}
\newcommand{\Y}{\ve{Y}}
\newcommand{\Code}{\mathcal{C}}
\newcommand{\Eset}{\mathcal {E}}
\newcommand{\basepoly}{b}
\newcommand{\C}{\ve{C}}
\newcommand{\R}{\ve{R}}
\renewcommand{\CE}{\Code_\mathrm{E}}
\renewcommand{\a}{\ve{a}}
\renewcommand{\b}{\ve{b}}
\newcommand{\IGC}{\ensuremath{\mathrm{I}\Gamma}}
\begin{document}

	\title{On Decoding and Applications of \\Interleaved Goppa Codes}
	\author{\IEEEauthorblockN{Lukas Holzbaur, Hedongliang Liu, Sven Puchinger, Antonia Wachter-Zeh}
		\IEEEauthorblockA{
			Institute for Communications Engineering, Technical University of Munich (TUM), Germany\\
			Email: \{lukas.holzbaur, lia.liu, sven.puchinger, antonia.wachter-zeh\}@tum.de
			\thanks{This work was supported by the Technical University of Munich -- Institute for Advanced Study, funded by the German Excellence Initiative and European Union 7th Framework Programme under Grant Agreement No. 291763 and the German Research Foundation (Deutsche Forschungsgemeinschaft, DFG) under Grant No. WA3907/1-1.
			}
	}}
	
	\maketitle
	
	\begin{abstract}
          Goppa Codes are a well-known class of codes with, among others, applications in code-based cryptography. In this paper, we present a \emph{collaborative decoding} algorithm for \emph{interleaved Goppa codes} (IGC). Collaborative decoding increases the decoding radius beyond half of the designed minimum distance. We consider wild Goppa codes and show that we can collaboratively correct more errors for binary Goppa codes than the Patterson decoder. We propose a modified version of the McEliece cryptosystem using wild IGC based on a recently proposed system by Elleuch et al., analyze attacks on the system and present some parameters with the corresponding key sizes.
	\end{abstract}
	
	\begin{IEEEkeywords}
		Interleaved Goppa codes, decoding, public-key cryptosystem, code-based cryptography, McEliece system
	\end{IEEEkeywords}
	
	\section{Introduction}
	G{oppa} codes \cite{GVD70} are a subclass of algebraic error-correcting codes called \emph{alternant codes}~\cite[Chapter~12]{Mac78}, which are subfield subcodes of generalized Reed--Solomon (RS) codes~\cite{RS60}. Therefore, every Goppa code of length $n$ over $\mathbb{F}_q$ is a subfield subcode of a generalized RS code in $\mathbb{F}_{q^m}^n$ and can be decoded with any RS decoder. 
	Alternatively, Goppa codes can be decoded by code specific algorithms, e.g., by solving a key equation with the Euclidean algorithm~\cite{decGC75keyeq}. Patterson~\cite{Pat75} introduced an algorithm with an extra ``key equation degree reduction'' step, which increases the decoding radius of binary Goppa codes. Barreto et al.~\cite{Barreto13} introduced a probabilistic algorithm which generalizes Patterson's algorithm over any prime field $\mathbb{F}_p$ to increase the decoding radius of Goppa codes from $\floor{\frac{r}{2}}$ to $\lfloor\frac{2}{p}r\rfloor$. Moreover, several list decoding approaches~\cite{List2011Augot, List2011Bern, ListBiGC2013} were proposed in order to decode Goppa codes beyond half the designed minimum distance. 
	
	Interleaved RS codes can be decoded almost up to the Singleton bound by collaborative decoding~\cite{schmidt2009collaborative}. Since interleaved Goppa codes are subcodes of interleaved RS codes, they can be decoded by any collaborative RS decoder. As an alternative, we present in this work the first collaborative decoder specifically for interleaved Goppa codes.
	
	In the second part of this work we consider the application of Goppa codes in code-based cryptography. The threat of quantum computers to the security of currently used public-key cryptosystems sparked an increased interest in post-quantum secure cryptosystems. One promising approach are code-based cryptosystems, such as the McEliece cryptosystem~\cite{McE78}. Besides being post-quantum secure, it also provides faster encryption and decryption than conventional public-key systems because algebraic error correcting codes offer efficient encoding and decoding algorithms. The downside of the McEliece cryptosystem is that for a given security level the key size is significantly larger than for currently used cryptosystems (e.g., for 128 bits security level, the key size of the original McEliece system is several hundred KB and for RSA $<1$ KB). This security level of the system depends heavily on the chosen code and several classes of codes have been proposed to decrease the key size. However, only Goppa codes have remained secure for a long time.

	In this work we introduce a new decoder for interleaved Goppa codes, based on Patterson's key equation~\cite{Pat75}. Further, we propose a repair and improvement of the interleaved McEliece scheme of~\cite{IGC2018molka} to secure the system against Tillich's attack~\cite{Tillich2018Persional} and present parameters for different security levels.

	\section{Preliminaries}\label{sec:preliminaries}
	% =============================================
	\subsection{Notations}
	Let $\GF{F}{q}$	be a finite field of size $q$. Denote by $\ve{a}\in \GF{F}{q}^n$ a vector of length $n$ over $\GF{F}{q}$ and by $\ve{A}\in\GF{F}{q}^{a \times b}$ a matrix with $a$ rows and $b$ columns over $\GF{F}{q}$. 
	We denote the Hamming weight of a vector $\ve{a}$ by $\wt(\ve{a})$ and the number of non-zero columns of $\ve{A}$ by $\wt(\ve{A})$. 
	A linear code $\set{C}$ of length $n$, dimension $k$ and minimum distance $d$ over $\GF{F}{q}$ is denoted by $[n,k,d]_q$ or $[n,k]_q$. 
	% =============================================
	\subsection{Goppa Codes}
	
	A Goppa code~\cite{GVD70} (see also~\cite{Mac78}) is defined by a \emph{locator set} $\set{L}$ and a \emph{Goppa polynomial} $g(x)$.
	\begin{definition}[Goppa Code]\label{def:GoppaCode}
		Let $q$ be a prime power and $m, n, r$ be some integers such that $rm\leq n\leq q^m$. Let $\set{L}=\{\alpha_0,\hdots,\alpha_{n-1}\}$ be a set of $n$ distinct elements of $\mathbb{F}_{q^m}$ and 
		$g(x)\in\GF{F}{q^m}[x]$ be a polynomial of degree $r$ such that $g(\alpha_i)\neq 0, \forall \alpha_i\in \set{L}$. 
		The Goppa code $\Gamma(\set{L},g)$ is defined as
		\begin{equation*}
		\Gamma(\set{L},g)=\left\{\ \ve{c}\ \Big|\  \sum\limits_{i=0}^{n-1}\frac{c_i}{x-\alpha_i} \equiv 0 \mod g(x),\; \forall \ve{c}\in\mathbb{F}_q^n \right\}.
		\end{equation*}
	\end{definition}
	If $g(x)$ has no multiple irreducible factors then $\Gamma(\set{L},g)$ is called a \emph{square-free} or \emph{separable} Goppa code. In addition, if $g(x)$ is an irreducible polynomial then $\Gamma(\set{L},g)$ is called an \emph{irreducible} Goppa code.
	A Goppa code $\Gamma(\set{L},g)$ as in Definition~\ref{def:GoppaCode} is a linear code over $\mathbb{F}_q$ of length $n = |\set{L}|$, dimension $k\geq n-mr$ and minimum distance $d\geq r+1$. For irreducible binary Goppa codes the distance is $d\geq 2r+1$. It is well-known that Goppa codes are subfield subcodes of $[n,n-r]_{q^m}$ generalized RS codes.
	% =============================================
	\subsection{Wild Goppa Codes}\label{sec:wildGoppa}
	\emph{Wild Goppa codes}~\cite{WildGC76} are a subclass of Goppa codes and have been suggested for the \emph{Wild McEliece}~\cite{WildBern2011}.
	
	\begin{definition}[Wild Goppa Codes]\label{def:wildGC}
		Let $\Gamma(\set{L},g)$ be as in Definition~\ref{def:GoppaCode}. If $\basepoly(x)$ is a monic \textbf{square-free} polynomial in $\GF{F}{q^m}[x]$, the Goppa codes $\Gamma(\set{L},\basepoly^q)$ and $\Gamma(\set{L},\basepoly^{q-1})$ are called \emph{wild Goppa codes}.
	\end{definition}
	It has been shown in~\cite{WildGC76} that the wild Goppa codes $\Gamma(\set{L},b^q)$ and $\Gamma(\set{L},b^{q-1})$ are the same code of length $|\set{L}|=n$, dimension $k\geq n-rm$ and distance $d \geq \frac{q}{q-1}r +1$, where~$r=\deg(b(x)^{q-1})$ .
	\begin{remark}[Binary square-free Goppa codes]
		The well-known binary square-free Goppa codes of minimum distance $d\geq 2r+1$ are a subclass of wild Goppa codes.
	\end{remark}
	% =============================================
	\subsection{Interleaved Goppa Codes}
	\begin{definition}[Interleaved Goppa Codes]
		Let $\Gamma(\mathcal{L},g)$ be a Goppa code as in Definition~\ref{def:GoppaCode}.
		An $\ell$-\emph{interleaved Goppa code} (IGC) $\IGC(\mathcal{L},g,\ell)$ is defined as
		\begin{equation*}
		\IGC(\mathcal{L},g,\ell)= 
		\left\lbrace
		\C \!= \!
		\begin{pmatrix}
		\ve{c}^{(1)}\\
		\vdots\\
		\ve{c}^{(\ell)}\\
		\end{pmatrix}\!, \! \forall \ve{c}^{(i)}\!\in \Gamma(\mathcal{L},g),  i=1,\dots,\ell
		\right\rbrace\!.
		\end{equation*}
	\end{definition}
	The advantage of interleaved codes is that if the errors occur in the same positions in all rows, e.g., because of burst errors on the channel, collaborative decoding of the rows can increase the decoding radius beyond half the (designed) minimum distance.
	Let $\R = \C+\E$ be the received word, where $\C \in \IGC(\mathcal{L},g,\ell)$ is a codeword and $\E \in \Fq^{\ell \times n}$ is an error matrix.
	By $\Eset$, we denote the set of indices of the non-zero columns of the error matrix $\E = [e_{ij}]$. The number of (burst) errors is thus given by $t =\wt(\E) := |\Eset|$.

	\section{Decoding of Interleaved Wild Goppa Codes}\label{sec:decIGC}
	
	In the following, we only consider interleaved codes that arise from wild Goppa codes, i.e., $g(x) = \basepoly(x)^q$ for some square-free polynomial $\basepoly(x)$ in $\Fqm[x]$.
	Recall that $r := (q-1) \deg \basepoly(x) = \tfrac{q-1}{q} \deg g(x)$ (e.g., $r = \deg \basepoly(x) = \frac{\deg g(x)}{2}$ for $q=2$).
	
	\subsection{Interleaving Patterson's Key Equation}
	
	We present a decoder for interleaved wild Goppa codes based on Patterson's decoder~\cite{Pat75}.
	The decoder solves a system of key equations which contains the following polynomials.
	
	\begin{definition}\label{def:ELP_EEP_Syndrome}
		Let $\R = [r_{ij}]$, $\E = [e_{ij}]$, and $\Eset$ be defined as above. For $i=1,\dots,\ell$, we define the \emph{error locator} $\Lambda(x)$, the \emph{$i^\mathrm{th}$ error evaluator} $\Omega_i(x)$ and the \emph{$i^\mathrm{th}$ syndrome} $S_i(x)$ polynomials as follows,
		\begin{align*}
		\Lambda(x) &:= \prod_{j \in \Eset} (x-\alpha_j) \ , \\
		\Omega_i(x) &:= \sum_{j \in \Eset} e_{ij} \prod_{\mu \in \Eset \setminus \{j\}} (x-\alpha_\mu) \ , \\
		S_i(x) &:= \sum\limits_{j=0}^{n-1}\frac{r_{ij}}{x-\alpha_j} \equiv \sum\limits_{j \in \Eset}\frac{e_{ij}}{x-\alpha_j} \mod g(x) \ .
		\end{align*}
	\end{definition}
	
	The goal of the decoder is to find the unknown error locator and evaluator polynomials from the known syndrome polynomials such that they fulfill the following relation.
	
	\begin{theorem}[System of Key Equations]\label{thm:key_equations}
		\begin{align*}
		\Omega_i(x) &\equiv \Lambda(x) S_i(x) \mod g(x) \ , \\
		\deg \Omega_i(x) &< \deg \Lambda(x) = |\Eset|
		\end{align*}
		for all $i=1,\dots,\ell$.
	\end{theorem}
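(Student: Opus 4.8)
The plan is to obtain both assertions directly from Definition~\ref{def:ELP_EEP_Syndrome}, working in the residue ring $\Fqm[x]/(g(x))$. The one preliminary remark needed is that for every $\alpha_j\in\set{L}$ the factor $x-\alpha_j$ is invertible modulo $g(x)$: since $g(\alpha_j)\neq 0$ we have $\gcd(x-\alpha_j,g(x))=1$, which is exactly what makes the formal fractions $1/(x-\alpha_j)$ appearing in $S_i(x)$ well defined modulo $g(x)$.

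The degree statement is the easy half. The $\alpha_j$ with $j\in\Eset$ are pairwise distinct, so $\Lambda(x)$ is a monic product of $|\Eset|$ distinct linear factors and hence $\deg\Lambda(x)=|\Eset|$ exactly. Each summand of $\Omega_i(x)$ is a scalar multiple of a product of $|\Eset|-1$ linear factors, so $\deg\Omega_i(x)\leq|\Eset|-1<\deg\Lambda(x)$.

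For the congruence I would introduce, for each $j\in\Eset$, the punctured product $\pi_j(x):=\prod_{\mu\in\Eset\setminus\{j\}}(x-\alpha_\mu)$, so that $\Lambda(x)=(x-\alpha_j)\,\pi_j(x)$ holds as an exact polynomial identity and $\Omega_i(x)=\sum_{j\in\Eset}e_{ij}\,\pi_j(x)$. Multiplying the syndrome congruence $S_i(x)\equiv\sum_{j\in\Eset}e_{ij}(x-\alpha_j)^{-1}\bmod g(x)$ by $\Lambda(x)$ and simplifying term by term via $\Lambda(x)(x-\alpha_j)^{-1}\equiv\pi_j(x)\bmod g(x)$ gives $\Lambda(x)S_i(x)\equiv\sum_{j\in\Eset}e_{ij}\,\pi_j(x)=\Omega_i(x)\bmod g(x)$, as claimed. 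If a self-contained proof of the syndrome congruence in Definition~\ref{def:ELP_EEP_Syndrome} is also wanted, one writes $r_{ij}=c_{ij}+e_{ij}$, uses $\sum_{j=0}^{n-1}c_{ij}(x-\alpha_j)^{-1}\equiv 0\bmod g(x)$ because $\c^{(i)}\in\Gamma(\set{L},g)$ (Definition~\ref{def:GoppaCode}), and drops the vanishing terms $e_{ij}=0$ for $j\notin\Eset$. The only point that deserves care --- and the place where I would be most careful --- is the manipulation of the inverses: one must make sure that ``cancelling'' $\Lambda(x)$ against a single factor $x-\alpha_j$ is legitimate, even though $\Lambda(x)$ and $g(x)$ need not be coprime. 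This is fine precisely because $\Lambda(x)(x-\alpha_j)^{-1}=\pi_j(x)$ is an exact polynomial identity before reduction modulo $g(x)$, so only invertibility of each individual $x-\alpha_j$ is invoked, which the preliminary remark guarantees; everything else is routine computation.
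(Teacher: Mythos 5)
Your proof is correct and takes essentially the same route as the paper, whose entire proof is the one-line assertion that the claims ``follow directly from the definition''; your write-up simply supplies the routine details (the exact factorization $\Lambda(x)=(x-\alpha_j)\pi_j(x)$, the invertibility of each $x-\alpha_j$ modulo $g(x)$, and the term-by-term degree count) that the authors leave implicit.
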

	
	\begin{proof}
		This congruence relations and inequalities follow directly from the definition.
	\end{proof}
	
	Theorem~\ref{thm:key_equations} assumes the specific structure of $\Omega_i(x)$ and $\Lambda(x)$ given in Definition~\ref{def:ELP_EEP_Syndrome}, which makes direct solving of the key equations a non-linear problem.
	Instead, we solve the following linearized, well-studied, version of the problem.
	
	\begin{problem}\label{prob:shift_register_problem}
		Given $g(x), S_1(x),\dots,S_\ell(x) \in \Fqm[x]$, find $\lambda(x), \omega_1(x), \dots, \omega_\ell(x) \in \Fqm[x]$, not all zero, such that
		\begin{align}
		\omega_i(x) &\equiv \lambda(x) S_i(x) \mod g(x) \ , \label{eq:shift_register_congruence} \\
		\deg \omega_i(x) &< \deg \lambda(x) \ , \label{eq:shift_register_degree} \\
		\deg \lambda(x) &\text{ minimal} \ . \label{eq:shift_register_minimality}
		\end{align}
	\end{problem}
	
	\begin{remark}
		Problem~\ref{prob:shift_register_problem} is well-studied in literature, see the overview and relation to several decoding problems in~\cite{nielsen2013generalised,nielsen2013list}.
		A solution of the problem can be found in
		\begin{align*}
		\mathcal{O}\big(\ell^3 r \log^2(r) \log(\log(r)) \big)
		\end{align*}
		over $\Fqm$, see~\cite{nielsen2013generalised} (note that $\deg g(x) = \tfrac{q}{q-1}r \in \mathcal{O}(r)$).
	\end{remark}
	
	For $\ell=1$, we can prove that the solution of Problem~\ref{prob:shift_register_problem} agrees with the actual error locator and error evaluator polynomial up to a scalar factor, for up to $\tfrac{q}{q-1} \cdot \tfrac{r}{2}$ errors.
	
	\begin{theorem}
		Let $\ell=1$ and $|\Eset| \leq \tfrac{q}{q-1} \cdot \tfrac{r}{2}$. Let $\lambda(x),\omega_1(x) \in \Fqm[x]$ be a solution of Problem~\ref{prob:shift_register_problem} with input $g(x),S_1(x)$. Then, the solution fulfills
		\begin{equation*}
		\lambda(x) = c \cdot \Lambda(x) \quad \text{and} \quad \omega_1(x) = c \cdot \Omega_1(x)
		\end{equation*}
		for some non-zero constant $c \in \Fqm$.
	\end{theorem}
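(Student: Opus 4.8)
The plan is to run the classical key-equation uniqueness argument, exploiting that the assumed radius $|\Eset| \le \tfrac{q}{q-1}\cdot\tfrac r2$ equals exactly half of $\deg g(x) = \tfrac{q}{q-1}r$. First I would record two coprimality facts about the genuine polynomials of Definition~\ref{def:ELP_EEP_Syndrome}. Every root of $\Lambda(x)$ is some $\alpha_j$ with $j\in\Eset\subseteq\set{L}$, and by the definition of a Goppa code $g(\alpha_j)\neq 0$, so $\gcd(\Lambda(x),g(x))=1$. Moreover, for $\ell=1$ a column $j\in\Eset$ being nonzero means $e_{1j}\neq 0$, hence $\Omega_1(\alpha_j) = e_{1j}\prod_{\mu\in\Eset\setminus\{j\}}(\alpha_j-\alpha_\mu)\neq 0$ for each such $j$; since $\Lambda(x)=\prod_{j\in\Eset}(x-\alpha_j)$ has only these (simple) roots, $\gcd(\Lambda(x),\Omega_1(x))=1$. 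By Theorem~\ref{thm:key_equations} the pair $(\Lambda,\Omega_1)$ is feasible for the constraints~\eqref{eq:shift_register_congruence}--\eqref{eq:shift_register_degree} of Problem~\ref{prob:shift_register_problem}, so the minimality~\eqref{eq:shift_register_minimality} of the computed solution gives $\deg\lambda(x)\le\deg\Lambda(x)=|\Eset|$; also $\lambda(x)\neq 0$, since $\lambda(x)=0$ would force $\omega_1(x)=0$ by~\eqref{eq:shift_register_degree}, contradicting that the solution is not all zero.

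Next I would cross-multiply the two congruences $\omega_1\equiv\lambda S_1$ and $\Omega_1\equiv\Lambda S_1 \pmod{g(x)}$ to obtain $\Lambda(x)\omega_1(x)\equiv\lambda(x)\Omega_1(x)\pmod{g(x)}$, i.e.\ $g(x)$ divides $\Lambda(x)\omega_1(x)-\lambda(x)\Omega_1(x)$. For the degrees, $\deg(\Lambda\omega_1)=\deg\Lambda+\deg\omega_1<\deg\Lambda+\deg\lambda\le 2|\Eset|$ and symmetrically $\deg(\lambda\Omega_1)<2|\Eset|$, so $\deg\bigl(\Lambda\omega_1-\lambda\Omega_1\bigr)<2|\Eset|\le\tfrac{q}{q-1}r=\deg g(x)$. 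A multiple of $g(x)$ of degree strictly less than $\deg g(x)$ is zero, whence $\Lambda(x)\omega_1(x)=\lambda(x)\Omega_1(x)$.

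Finally I would extract the scalar: from $\gcd(\Lambda,\Omega_1)=1$ and $\Lambda\omega_1=\lambda\Omega_1$ we get $\Lambda(x)\mid\lambda(x)$, and together with $\lambda(x)\neq 0$ and $\deg\lambda(x)\le\deg\Lambda(x)$ this forces $\lambda(x)=c\,\Lambda(x)$ for some $c\in\Fqm\setminus\{0\}$; cancelling the nonzero factor $\Lambda(x)$ in $\Lambda\omega_1=c\,\Lambda\Omega_1$ then yields $\omega_1(x)=c\,\Omega_1(x)$.

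The hard part will not be any single idea but the bookkeeping: one has to verify that the radius hypothesis is used exactly where needed (to guarantee $2|\Eset|\le\deg g(x)$, which makes the divisibility collapse to an identity), that the $\ell=1$ assumption is what delivers $\gcd(\Lambda,\Omega_1)=1$ (for $\ell>1$ a column can be nonzero while some individual $e_{ij}$ vanishes, so this step genuinely breaks and one only recovers a scalar multiple after imposing an additional genericity/restriction), and that the degenerate case $\lambda(x)=0$ is excluded before manipulating degrees.
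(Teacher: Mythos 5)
Your proposal is correct and follows essentially the same route as the paper's proof: establish $\gcd(\Lambda,\Omega_1)=1$ via $\Omega_1(\alpha_j)\neq 0$ for $\ell=1$, combine the two congruences into $\Lambda(x)\omega_1(x)\equiv\lambda(x)\Omega_1(x)\pmod{g(x)}$, drop the modulus by the degree bound $2|\Eset|\le\deg g(x)$, and conclude $\Lambda\mid\lambda$ together with $\deg\lambda\le\deg\Lambda$ from minimality. Your only deviations are cosmetic and slightly more careful: you cross-multiply directly instead of invoking $\Lambda^{-1}\bmod g$, and you establish $\deg\lambda\le\deg\Lambda$ and $\lambda\neq 0$ before the degree argument, where the paper uses them implicitly.
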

	
	\begin{proof}
		The proof works similar to \cite[Proposition~6.1]{roth2006}.
		We have $g(\alpha_i) \neq 0$ for all $i=0,\dots,n-1$, so $(x-\alpha_i) \nmid g(x)$, and $\gcd(\Lambda(x),g(x)) = 1$. Hence, the inverse of $\Lambda(x)$ modulo $g(x)$ exists and we can rewrite the key equation into
		\begin{equation*}
		\Lambda^{-1}(x) \Omega_i(x) \equiv S_i(x) \mod g(x) \ .
		\end{equation*}
		By \eqref{eq:shift_register_congruence}, we obtain
		$\omega_i(x) \equiv \lambda(x) \Lambda^{-1}(x) \Omega_i(x) \mod g(x)$, so
		\begin{align}
		\omega_i(x) \Lambda(x)  &\equiv \lambda(x) \Omega_i(x) \mod g(x) \ . \label{eq:key_equation_reformulated}
		\end{align}
		By definition, the degrees of both sides of the congruence are
		\begin{equation*}
		< 2 |\Eset| \leq \tfrac{q}{q-1} \cdot r = \deg g(x) \ ,
		\end{equation*}
		so we can omit the modulo operation. Hence,
		\begin{equation}
		\omega_i(x) \Lambda(x) = \lambda(x) \Omega_i(x) \ . \label{eq:key_equation_reformulated_equation}
		\end{equation}
		Furthermore, for $\ell=1$, we have $\Omega_1(\alpha_i) \neq 0$ for all $i \in \Eset$. Hence, $(x-\alpha_i) \nmid \Omega_1(x)$ and $\gcd(\Omega_1(x), \Lambda(x)) = 1$.
		By \eqref{eq:key_equation_reformulated_equation}, we must have $\Lambda(x) \mid \lambda(x)$.
		Since $\Lambda(x)$ and $\Omega_1(x)$ satisfy conditions \eqref{eq:shift_register_congruence} and \eqref{eq:shift_register_degree}, and $\lambda(x)$ is of minimal degree satisfying the conditions, we must have $\deg \lambda(x) \leq \deg \Lambda(x)$. Hence,
		\begin{equation}
		\lambda(x) = c \cdot \Lambda(x)
		\end{equation}
		for some non-zero scalar $c \in \Fqm$. We obtain $\omega_1(x) = c \cdot \Omega_1(x)$ from \eqref{eq:key_equation_reformulated_equation}.
	\end{proof}
	
	\begin{remark}
		For $\ell=1$, the system of key equations in Theorem~\ref{thm:key_equations} is equivalent to Patterson's key equation \cite[Equation~(3)]{Pat75} with $g(x) = b(x)^{q-1}$ instead of $g(x) = b(x)^{q}$. Since we use wild Goppa codes here, where $\Gamma(\set{L},b(x)^{q-1}) = \Gamma(\set{L},b(x)^{q})$, we can circumvent the ``reduction step'' in Patterson's decoder \cite[Algorithm~4]{Pat75} and directly decode up to $\tfrac{q}{q-1} \cdot \tfrac{r}{2}$ errors uniquely. This enables us to ``interleave'' our key equation which is not possible with the ``reduced key equation'' in \cite[Section~V]{Pat75}.
		Furthermore, for $q>3$, we can decode more errors than the algorithm in~\cite{Barreto13} uniquely. Note that both decoders are probabilistic and similar to~\cite{Barreto13} we have to rely on simulation results to determine the decoding failure probability (see Section~\ref{sec:SimulationResults}).
	\end{remark}
	
	By counting the number of unknowns (coefficients of $\lambda(x)$ and $\omega_i(x)$) and equations of the linear system given by the coefficients of the left- and right-hand side of the congruence relation, one can see that Problem~\ref{prob:shift_register_problem} can only have a unique minimal solution with
	\begin{align}
	\lambda(x) = c \cdot \Lambda(x) \text{ and }
	\omega_i(x) = c \cdot \Omega_i(x) \label{eq:unique_solution}
	\end{align}
	for some non-zero scalar $c \in \Fqm$ for all $i$ if
	\begin{align}\label{eq:IntDecodingRadius}
	|\Eset| \leq\frac{\ell}{\ell+1} \cdot \frac{q}{q-1} \cdot r =: t_{\max} \ .
	\end{align}
	Our simulation results indicate that below this maximal decoding radius, most of the error matrices $\E$ of weight at most $\wt(\E) = t_{\max}$ can be decoded by our algorithm (i.e., any solution of Problem~\ref{prob:shift_register_problem} fulfills \eqref{eq:unique_solution}).
	More precisely, the results indicate that the number of error patterns for which decoding fails or miscorrects decreases exponentially in the value $t_{\max}-t$, where $t$ is the actual number of errors.
	
	\begin{remark}
		As an alternative to the decoder presented above, we can directly decode in an interleaved variant of the GRS supercode of the used Goppa code (with minimum distance $\deg g(x)$).
		We can use all known decoding algorithms for these interleaved codes, e.g.,~\cite{krachkovsky1997decoding,bleichenbacher2003decoding,schmidt2009collaborative} (or the more advanced algorithms in~\cite{wachterzeh2014decoding,coppersmith2003reconstructing,parvaresh2007algebraic,cohn2013approximate,puchinger2017decoding}, which we will not consider in this paper).
		The algorithms in~\cite{krachkovsky1997decoding,bleichenbacher2003decoding,schmidt2009collaborative} yield the same maximal decoding radius as the interleaved Patterson decoder described above.
	\end{remark}
	
	\subsection{Simulation Results}\label{sec:SimulationResults}
	
	Since the interleaved decoding radius exceeds the unique decoding radius, decoding fails with a certain probability. For interleaved RS codes an upper bound on the probability of decoding failure was derived in~\cite{schmidt2009collaborative}. However, even though interleaved Goppa codes are subfield subcodes of interleaved generalized RS codes, this bound does not hold for the former, as it assumes random error patterns from $\GF{F}{q^m}$, while the error patterns in the case of interleaved Goppa codes are only from the subfield $\GF{F}{q}$. A bound for interleaved subfield subcodes is an open problem left for future work and we rely on simulation results to support our conjecture that decoding will succeed with high probability.
	
	Figure~\ref{fig:simulationResults} shows the simulation results for an $[127,85,\geq 13]_2$ wild IGC for $\ell=2$ and $\ell =5$. As it is well known that the rank of the error matrix is related to the failure probability (see, e.g.,~\cite{metzner1990general}), the probability of decoding failure for full-rank error matrices is also shown. The results confirm that the bound of~\cite{schmidt2009collaborative} does not hold for subfield subcodes, as it is clearly exceeded by the probability of decoding failure of the IGC, regardless of the rank of the error matrix. However, the seemingly exponential decay in probability of decoding failure supports our conjecture.
	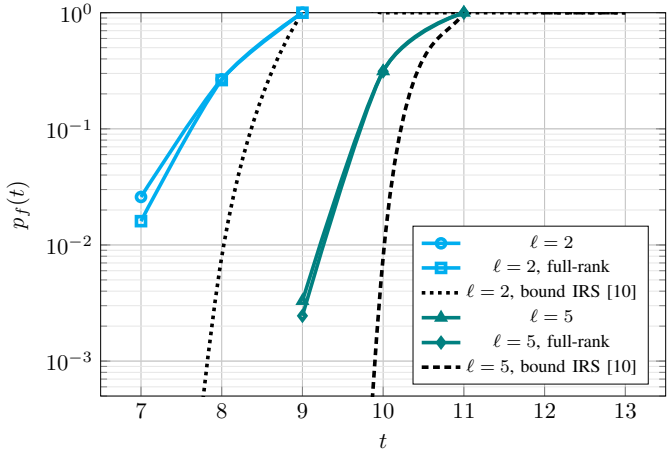
\begin{figure}
		\centering
		\resizebox{\linewidth}{!}{\begin{tikzpicture}
% \pgfplotscreateplotcyclelist{my black white}{%
% 	solid, thick, every mark/.append style={solid,fill=white}, mark=o\\%
% 	densely dashed, thick, every mark/.append style={solid, fill=white},mark=square\\%
% 	solid, thick, every mark/.append style={solid, fill=white}, mark=triangle\\%
% 	dashdotdotted, thick, every mark/.append style={solid},mark=star\\%
% 	densely dashdotted, thick, every mark/.append style={solid, fill=white},mark=diamond\\%
% 	dotted, thick, every mark/.append style={solid, fill=white}, mark=square\\%
% 	dashdotted, thick, every mark/.append style={solid, fill=white},mark=otimes\\%
% 	loosely dashed, thick, every mark/.append style={solid, fill=white},mark=\\%
% 	densely dotted, thick, every mark/.append style={solid, fill=white}, mark=otimes\\%
% 	dashed, thick, every mark/.append style={solid, fill=white},mark=diamond\\%
% 	dashdotdotted, thick, every mark/.append style={solid},mark=triangle\\%
% 	densely dashed, thick, every mark/.append style={solid, fill=white},mark=hexon\\%
% 	loosely dotted, thick, every mark/.append style={solid, fill=white}, mark=triangle\\%
% }
\begin{semilogyaxis}[
   width=\linewidth,%10cm,
   height=6cm,
   scale only axis,
   grid=both,
   grid style={line width=.1pt, draw=gray!20},
   major grid style={line width=.2pt,draw=gray!50},
   xmin=6.5, xmax=13.5,
   xtick={4,5, ...,13},
   xmajorgrids,
   ymin=5e-4, ymax=1,
   xlabel={$t$},
   ylabel={$p_f(t)$},% (itr.=$2123$)},
   % title={$\ell=2$: $r = 6, m = 7$, $[127,85,>=13]_2$ wild IGC},
%   axis lines*=left,
%  line width=1.0pt,
%  mark size=2.0pt,
   enlargelimits=false,
   legend pos = south east,
   legend style ={font = \footnotesize},
   	% legend style ={ at={(0.6,0)}, 
        % anchor=south west, draw=black, 
        % fill=white,align=left},
    	% cycle list name=my black white,
    	smooth
        ]

%%%%%%%%%%%%%%%%% l=2 %%%%%%%%%%%%%%%%%%%%
\addplot+[color=cyan, mark = o, ultra thick] coordinates{
  (3, 0.00000) (4, 0.00000) (5, 0.00000) (6, 0.00000) (7, 0.025907) (8, 0.26708) (9, 1.0000)
};
\addlegendentry{$\ell=2$};

\addplot+[color=cyan, mark = square, ultra thick] coordinates{
  (3, 0.00000) (4, 0.00000) (5, 0.00000) (6, 0.00000) (7, 0.016030) (8, 0.26284) (9, 1.0000)
};
\addlegendentry{$\ell=2$, full-rank};

% \addplot+[color=blue] coordinates{
%   (3, 0.00000) (4, 0.00000) (5, 0.00000) (6, 0.00000) (7, 0.00000) (8, 0.0075365) (9, 1.0000) 
% };
% \addlegendentry{errors in $\GF{F}{q^m}$};

% \addplot+[color=blue] coordinates{
%   (3, 0.00000) (4, 0.00000) (5, 0.00000) (6, 0.00000) (7, 0.00000) (8, 0.0080075) (9, 1.0000)
% };
% \addlegendentry{errors in $\GF{F}{q^m}$ (full-rank)};

\addplot+[color=black, mark = none, dotted, ultra thick] coordinates{
  (3, 1.9414e-34) (4, 4.0718e-28) (5, 8.5396e-22) (6, 1.7910e-15) (7, 3.7562e-9) (8, 0.0078778) (9, 1) (10, 1) (11, 1) (12, 1) (13, 1)
};
\addlegendentry{$\ell=2$, bound IRS \cite{schmidt2009collaborative}};

%%%%%%%%%%%%%%%%% l=5 %%%%%%%%%%%%%%%%%%%%
\addplot+[color=teal, mark = triangle, ultra thick] coordinates{
  (3, 0.00000) (4, 0.00000) (5, 0.00000) (6, 0.00000) (7, 0.00000) (8, 0.00000) (9, 0.0032972) (10, 0.31324) (11, 1.0000)
};
\addlegendentry{$\ell=5$};

\addplot+[color=teal, mark = diamond, ultra thick] coordinates{
  (3, 0.00000) (4, 0.00000) (5, 0.00000) (6, 0.00000) (7, 0.00000) (8, 0.00000) (9, 0.0024534) (10, 0.31271) (11, 1.0000)
};
\addlegendentry{$\ell=5$, full-rank};

% \addplot+[color=blue] coordinates{
%   (3, 0.00000) (4, 0.00000) (5, 0.00000) (6, 0.00000) (7, 0.00000) (8, 0.00000) (9, 0.00000) (10, 0.0084786) (11, 1.0000)
% };
% \addlegendentry{errors in $\GF{F}{q^m}$};

% \addplot+[color=blue] coordinates{
%   (3, 0.00000) (4, 0.00000) (5, 0.00000) (6, 0.00000) (7, 0.00000) (8, 0.00000) (9, 0.00000) (10, 0.0084786) (11,1)
% };
% \addlegendentry{errors in $\GF{F}{q^m}$ (full-rank)};

\addplot+[color=black, mark = none, ultra thick] coordinates{
  (3, 2.4739e-91) (4, 1.0880e-78) (5, 4.7852e-66) (6, 2.1045e-53) (7, 9.2559e-41) (8, 4.0708e-28) (9, 1.7903e-15) (10, 0.0078740) (11, 1) (12, 1) (13, 1)
};
\addlegendentry{$\ell=5$, bound IRS \cite{schmidt2009collaborative}};

   \end{semilogyaxis}
\end{tikzpicture}%

%%% Local Variables:
%%% mode: latex
%%% TeX-master: "../main"
%%% End:}
		\caption{Probability of decoding failure of an $[127,85,\geq13]_2$ (unique decoding radius $=6$) wild IGC for $\ell=2$ and $\ell=5$ compared to the bound from~\cite{schmidt2009collaborative} on probability of decoding failure of the corresponding IRS supercode in $\GF{F}{2^7}$. For each point $>2000$ iterations were performed.} \label{fig:simulationResults}
		\vspace{-0.4cm}
	\end{figure}

	\subsection{$\ell$-Interleaved Subfield Subcodes in $\Fq$ vs. Codes in $\F_{q^\ell}$}
	
	Goppa codes are subfield subcodes of RS codes in $\Fqm$ and can be constructed over any field $\F_{q^\gamma}$ with $\gamma \mid m$. We first consider the case $q>2$. An $\ell$-interleaved Goppa code $\code_{\IGC}$ over $\Fq$ with $r_{\IGC}=\deg(g_{\IGC}(x))$ is of rate $R_{\IGC}\approx \frac{n-r_{\IGC}m}{n}$ and each codeword is of length $\ell n$ over $\Fq$. For $\ell | m$, a Goppa code $\code_\Gamma$ with Goppa polynomial $g_{\Gamma}(x)$ over $\F_{q^\ell}$ is of rate $R_\Gamma \approx \frac{n-r_{\Gamma}\frac{m}{\ell}}{n}$ and has codeword size $n \log_q q^\ell = \ell n$ over $\Fq$. To obtain the same rate for both codes, i.e., $R_{\Gamma} = R_{\IGC}$, the degree $r_\Gamma$ of $g_\Gamma(x)$ has to be chosen as
	\begin{align*}
	\frac{n-r_{\Gamma}\frac{m}{\ell}}{n} &= \frac{n-r_{\IGC}m}{n} \; \Rightarrow \;  r_\Gamma = \ell r_{\IGC} \ .
	\end{align*}
	For appropriately chosen Goppa polynomials the distances of the codes are $d_{\IGC} = r_{\IGC}+1$ and $d_\Gamma = r_\Gamma+1 = \ell r_{\IGC}+1$ respectively. Comparing the decoding radii for interleaved decoding of $\code_{\IGC}$ and bounded minimum distance decoding of $\code_{\Gamma}$ gives
	$t_{\IGC} = \frac{\ell}{\ell+1} (d_{\IGC}-1) > \frac{d_\Gamma-1}{2}  = t_{\Gamma}$, which implies $\ell <1$.
	It follows that, in general, the decoding radius of an $\ell$-interleaved Goppa code is not larger than the unique decoding radius of the corresponding code over a larger field with the same codeword size and code rate. The only exception are $\ell=2$ interleaved binary Goppa codes with square-free Goppa polynomial. These Goppa codes are of distance $d_{\IGC} = 2r_{\IGC} +1$ and hence the radius is increased for $\ell < 3$.
	
	It follows that if the sole motivation of interleaving is increasing the decoding radius regardless of the size of the generator and parity check matrix, it is generally advantageous to use a Goppa code over $\F_{q^\ell}$ instead of $\ell$-interleaving a Goppa code over $\F_q$, with the exception of $q=\ell=2$.
	However, as we will see in the next section, interleaved Goppa codes do have an application in code-based cryptography.

	\section{Application: Improvement and Reparation of a Cryptosystem based on Interleaved Goppa Codes}\label{sec:IntMcE}

	Recently,~\cite{IGC2018molka} proposed a variant of the McEliece cryptosystem based on interleaved Goppa codes.
	The idea is that the public key is an obfuscated generator matrix of a Goppa code and the ciphertext is a corrupted codeword of a corresponding interleaved code. 	Since the interleaved code can correct more (burst) errors than the original code, the level of security against generic decoding (e.g., information-set decoding), which usually determines the security level, is increased. On the other hand, structural attacks remain as hard as on the original Goppa-code-based system.
	Hence, smaller key sizes than in the original McEliece cryptosystem can be achieved.
	We modify the new system by using wild Goppa codes, which further increase the decoding radius. We also consider several attacks and propose a repair method and restrictions on parameters to avoid the attacks.
	
	\subsection{System Description}\label{sec:intsysdesc}
	
	Alice generates the key pair: \emph{public key} $(\Gpub,\tpub,\ell)$ and \emph{private key} $(\ve{S}, \ve{P}, \mathcal{D})$, where $\mathcal{D}$ is an efficient decoder for the $\ell$-interleaved wild Goppa code with generator matrix ${\ve{G}=\ve{S}^{-1} \Gpub \ve{P}^{-1} \in \F_q^{k\times n}}$ correcting up to $\tpub = t_{\max} = \frac{\ell}{\ell+1} \cdot \frac{q}{q-1} \cdot r$ errors.
	
	Bob encrypts the secret message $\M\in\GF{F}{q}^{\ell\times k}$ into a ciphertext $\Y\in\GF{F}{q}^{\ell \times n}$ by $\Y=\M\Gpub+\E$, where $\E\in\GF{F}{q}^{\ell\times n}$ is a full-rank random matrix with $\tpub$ non-zero columns. 
	Alice retrieves the secret messages by $\hat{\M}=\mathcal{D}(\Y\ve{P}^{-1})\ve{S}^{-1}$.
	%% --------------------------------------
	\subsection{Decoding Attacks}\label{sec:attacks}
	%% --------------------------------------
	\subsubsection{Finding the Low-Weight Codewords Attack}
	Consider the following three codes
	\begin{equation*}
	\Code := \left\langle \Gpub \right\rangle, \quad
	\Code' := \left\langle \begin{bmatrix} 
	\Gpub \\
	\Y
	\end{bmatrix}\right\rangle, \quad \text{ and } \quad \CE := \left\langle \E \right\rangle \ .
	\end{equation*}
	Obviously, $\Code' = \Code + \CE$ since we can perform row operations to get
	\begin{equation*}
	\begin{bmatrix}
	\Gpub \\
	\Y
	\end{bmatrix} \sim \begin{bmatrix}
	\Gpub \\
	\E
	\end{bmatrix} \ .
	\end{equation*}
	Hence, we have
	$d(\Code') \leq d(\CE) =: \dE$.
	Finding several words of weight $\dE$ in $\Code'$ might reveal error positions, thereby allowing information set decoding attacks (ISD) with less error positions. Assuming the worst case, i.e., all found words of weight $\dE$ belong to $\CE$ and the union of their support is the set of error positions, this gives an attack whose work factor is determined by algorithms for finding codewords of weight $\dE$ in a linear code. 
	
	Note that we can only guarantee $d(\Code') \leq \dE$.
	In principle, there might be codewords in $\Code'$ of smaller weight.
	Such codewords would always be of the form
	$\c = \a + \b$,
	where $\a \in \Code \setminus \{\ve{0}\}$, $\b \in \CE \setminus \{\ve{0}\}$, and $\wt(\a) < \dE + \tpub$. 
	Hence, the probability that such codewords exist depends on the weight distributions of the codes $\Code$ and $\CE$ (e.g., how many codewords $\a$ of weight $\wt(\a) < \dE + \tpub$ exist).
	Furthermore, even if such words $\c$ exist and are found by an attack, it needs to be studied whether $\c$ would reveal some of the error positions. Note that this is a general problem of any McEliece system correcting beyond the unique decoding radius, \emph{e.g.}, through list decoding~\cite{bernstein2008attacking,barbier2011key}.
	%% --------------------------------------
	
	\subsubsection{Finding the Support of the Subcode Attack}
	Tillich pointed out in~\cite{Tillich2018Persional} that since the code $\CE[n,\ell]$ is a subcode of $\set{C}'[n,k+\ell]$ and $|\supp(\CE)|=\tpub$, where $\supp(\CE)=\{i\; ; \; \exists \ve{c}\in \CE, c_i\neq 0 \}$, one can reveal the error positions by finding $\supp(\CE)$.
	This problem has been studied by Otmani and Tillich for the binary case in~\cite{Tillich2011Attack}, which gives a very efficient attack if the parameter $p$ chosen to fulfill $2p\geq d_{GV}(\frac{\tpub-\ell}{n-\ell}(k+\ell+l),\ell)$ is small for some integer~$l$ (e.g., \cite{Tillich2011Attack} gives $1\leq p \leq 4$ as a typical range for $p$). For the parameters presented in Table~\ref{tab:keysize} this is not the case and the complexity of this attack is far from causing a security bottleneck. However, this attack needs to be considered when choosing the system parameters. 
	
	\subsection{Repair}
	Our repair is based on the idea of choosing the rows of the error matrix $\E$ as the basis of a code with large minimum distance $\dE$. The rows of $\E$  have to be chosen to be linearly independent to prevent brute-forcing linear combinations of the rows of $\Y$, resulting in error-free linear combinations of the codewords (i.e., rows of $\Y-\E$), which might reveal part of the message.
	
	Since $\E$ has only $\tpub$ non-zero columns, we choose the submatrix $\E'$ of $\E$, consisting of these columns, to be a generator matrix of a code with parameters
	\begin{equation*}
	\CE'[\tpub,\ell,\dE] := \langle \E' \rangle.
	\end{equation*}
	Thus, the overall error code $\CE$ has parameters $[n,\ell,\dE]_q$.
	
	\begin{remark}
		Note that since the code $\CE'$ is required to have specific properties, it might have to be considered public, e.g., if there are only few known constructions for the desired $\dE$. Then revealing the error positions in $\Y$ is equivalent to determining the permutational equivalence of $\CE$ and a subcode of $\set{C}'$, which has been shown to be an NP-complete problem~\cite{BGK17}. Nevertheless it needs to be studied if this could lead to a more efficient attack than finding low weight codewords. To avoid this kind of attack it is also possible to choose $\CE$ at random from some large family of codes. 
	\end{remark}
	% 	% --------------------------------------	
	\subsection{Measure of Security Level (SL)}
	
	To determine the SL of the original McEliece cryptosystem and our system (see Table~\ref{tab:keysize}), we use the currently fastest algorithm over arbitrary $\Fq$ presented in~\cite{interlando2018generalization}. Another recent algorithm~\cite{gueye2017generalization} might yield smaller security levels but that needs to be further verified.
	Both algorithms are generalizations of several important improvements of information-set decoding attacks since 2011: \cite{bernstein2011smaller} for \cite{interlando2018generalization} and \cite{hirose2016may,may2015computing,meurer2013coding,becker2012decoding,may2011decoding} for \cite{gueye2017generalization}. 
The SL of our repaired system is calculated with $\dE$ rather than $\tpub$, since any non-trivial linear combination of the received words (rows of the received matrix $\Y$) are codewords corrupted by errors of weight at least $\dE$.
	% 	% --------------------------------------	
	\subsection{Parameter Choice}\label{sec:params}
	In order to improve upon the original or Wild McEliece system, the work factor ISD~\cite{gueye2017generalization} must be larger than the one of generic decoding of the original system. Neglecting the difference in dimension ($k$ compared to $k+\ell$), this condition translates to
	\begin{equation}\label{eq:DistanceGain}
	\dE > \left\lfloor \frac{1}{2} \cdot \frac{q}{q-1}r \right\rfloor \ .
	\end{equation}
	In the following, we analyze for which parameters $q, \ell$ and $r$ such a linear code $\CE'$ exists.
	We start with the negative result that there is no improvement for $q=2$.
	
	\begin{theorem}\label{thm:noGainq2}
		For $q=2$, the work factor cannot be increased by interleaving.
	\end{theorem}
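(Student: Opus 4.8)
The plan is to assume, toward a contradiction, that for $q=2$ there is an error code $\CE'$ whose parameters satisfy the improvement condition \eqref{eq:DistanceGain}, and then to show that these parameters violate the Plotkin bound. First I would specialize every relevant quantity to $q=2$. Here $\frac{q}{q-1}=2$ and $r=\deg\basepoly(x)$ is an integer, so the right-hand side of \eqref{eq:DistanceGain} equals $\lfloor\tfrac12\cdot 2r\rfloor=r$ and the condition reads $\dE\ge r+1$. Moreover the interleaved decoding radius is $t_{\max}=\frac{\ell}{\ell+1}\cdot 2r=\frac{2\ell r}{\ell+1}$, and the length of $\CE'$ equals the number of corrupted columns $\tpub\le t_{\max}$. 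Since the rows of $\E'$ are chosen linearly independent, $\CE'$ is a binary linear code of length $\tpub\le\frac{2\ell r}{\ell+1}$, dimension $\ell$, and minimum distance $\dE\ge r+1$; in particular $|\CE'|=2^\ell$.

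Next I would apply the binary Plotkin bound. From $\tpub\le\frac{2\ell r}{\ell+1}<2r$ and $\dE\ge r+1$ we get $2\dE\ge 2r+2>\tpub$, so $\CE'$ lies in the Plotkin regime and $|\CE'|\le\frac{2\dE}{2\dE-\tpub}$. Using $r\le\dE-1$ to bound the length,
\begin{equation*}
2\dE-\tpub\;\ge\;2\dE-\frac{2\ell(\dE-1)}{\ell+1}\;=\;\frac{2(\dE+\ell)}{\ell+1}\,,
\end{equation*}
and therefore
\begin{equation*}
2^\ell\;=\;|\CE'|\;\le\;\frac{2\dE}{2\dE-\tpub}\;\le\;\frac{\dE(\ell+1)}{\dE+\ell}\;<\;\ell+1\,.
\end{equation*}
Since $2^\ell\ge\ell+1$ for every $\ell\ge 1$, this is a contradiction. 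Hence no such $\CE'$ exists, i.e., \eqref{eq:DistanceGain} is unsatisfiable for $q=2$, and interleaving cannot increase the work factor.

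I do not expect a genuine obstacle; the computation is elementary once the right tool is identified. The one step that takes a little insight is recognizing that the specialized requirements for $q=2$ — length roughly $2r$ together with minimum distance exceeding $r$ — force $\CE'$ past the relative-distance-$\tfrac12$ threshold, precisely the regime in which the Plotkin bound makes the number of codewords far too small to reach $2^\ell$. Two minor points I would nail down in writing: it is irrelevant whether $\tpub$ equals $t_{\max}$ exactly or its integer part, since only $\tpub\le\frac{2\ell r}{\ell+1}$ is used; and the identity $|\CE'|=2^\ell$ rests on the linear independence of the rows of $\E'$ imposed by the repair — dropping it only shrinks $|\CE'|$ and the same contradiction still follows. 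If a sharper estimate were ever needed one could instead use $|\CE'|\le 2\lfloor\dE/(2\dE-\tpub)\rfloor$, but it is not.
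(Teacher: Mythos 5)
Your proof is correct, and it reaches the same terminal inequality $2^\ell < \ell+1$ as the paper, but via the Plotkin bound where the paper uses the Griesmer bound. The two routes are in fact quantitatively identical here: the paper relaxes Griesmer to $\tpub \geq \sum_{i=0}^{\ell-1}\dE 2^{-i} = \dE\left(2-2^{1-\ell}\right)$, and your Plotkin step $2^\ell \leq \frac{2\dE}{2\dE-\tpub}$ rearranges to exactly $\tpub \geq 2\dE\left(1-2^{-\ell}\right) = \dE\left(2-2^{1-\ell}\right)$, so after the ceiling relaxation the two classical bounds coincide in this regime and the remaining algebra is the same. What your version buys is a slightly more self-contained narrative (you make explicit that $q=2$ forces $\CE'$ into the relative-distance-$\tfrac12$ Plotkin regime, which is a good structural explanation of \emph{why} binary fails while $q>2$ does not) and a cleaner endgame: you correctly observe that $2^\ell \geq \ell+1$ for \emph{all} $\ell\geq 1$, whereas the paper's closing remark that $2^\ell<\ell+1$ holds "only for $\ell=1$" is a minor slip ($2^1=2\not<2$); either way the conclusion that no $\ell\geq 2$ yields an improvement stands. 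Your housekeeping points (only $\tpub\leq\frac{2\ell r}{\ell+1}$ is needed, and linear independence of the rows of $\E'$ gives $|\CE'|=2^\ell$) are both accurate and worth keeping.
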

	\begin{proof}
		With (\ref{eq:IntDecodingRadius}) and (\ref{eq:DistanceGain}) we get
		\begin{equation*}
		\tpub < \frac{2\ell}{\ell+1} d_E .
		\end{equation*}
		as a necessary condition for an improvement compared to the original McEliece cryptosystem in terms of the code parameters of  $\CE'[\tpub,\ell,\dE]$.
		By the Griesmer bound~\cite{griesmer1960bound} the relation
		\begin{equation*}
		\tpub \geq \sum_{i=0}^{\ell-1} \left\lceil \frac{\dE}{2^i} \right\rceil \geq \sum_{i=0}^{\ell-1} \frac{\dE}{2^i} = \dE(2-2^{-(\ell-1)})
		\end{equation*} 
		holds and it follows that there can only be an improvement if
		\begin{align*}
		\tpub < \frac{2\ell}{\ell+1} \dE &\leq \frac{2\ell}{(\ell+1)(2-2^{-(\ell-1)})} \tpub \\
		\Leftrightarrow \quad 1 &< \frac{\ell}{(\ell+1)(1-2^{-\ell})} \\
		2^{\ell} &< \ell +1\ ,
		\end{align*}
		which is only the case for $\ell=1$.
	\end{proof}
	
	Larger fields, $q>2$, provide more flexibility in the code parameters.
	For $q\geq \tpub$, we can even achieve $\dE = \tpub-\ell+1$ (which is the maximal possible $\dE$ due to the Singleton bound) using an MDS code, but also for smaller field sizes there are codes with sufficiently large minimum distance $\dE$.
	For large $\tpub$, we could use asymptotically good sequences of codes, e.g., AG codes~\cite{shum2001low} over small fields.
	For small values of $\tpub$, we can use tables of good codes, e.g., \underline{CodeTables}~\cite{CodeTable}.
	For instance:
	\begin{itemize}
		\item For parameters $q=3$, $\ell=7$ and $\tpub = 110$ there is a $[110,7,\dE=70]_3$ code, while the unique decoding radius for these parameters is $\left\lfloor \frac{1}{2} \cdot \frac{q}{q-1}r \right\rfloor =63$.
		\item For parameters $q=4$, $\ell=9$ and $\tpub = 266$ there is a $[266,9,\dE=195]_4$ code, while the unique decoding radius for these parameters is $\left\lfloor \frac{1}{2} \cdot \frac{q}{q-1}r \right\rfloor =148$.
	\end{itemize}
	It is notable that the used code is only required to have good code parameters, but we do not need an efficient decoding algorithm.
	
	\begin{remark}
		Apart from the decoding attacks mentioned in~\ref{sec:attacks}, the chosen Goppa code has to resist structural attacks, i.e., attacks that recover the secret key from the public generator matrix, such as the attack on certain quadratic wild Goppa codes~\cite{AttackWild2017} or a potential attack resulting from the distinguisher on high-rate Goppa codes~\cite{AttackWild2013FUOPT}. Similar to the original McEliece system, the public generator matrix of the interleaved system is a generator matrix of a Goppa code, hence the same considerations apply. 
	\end{remark}

	\subsection{Key Size of repaired interleaved McEliece}

	Table~\ref{tab:keysize} compares the $(n,k,t)$ Wild McEliece and our proposed repaired $\ell$-interleaved $(n,k,\tpub)$ McEliece for typical SL (i.e., 128, 256 bits) in terms of the key size. 
	
	For each parameter set we compute the size of the public key in systematic form as $k(n-k)$ bits. Note that we assume appropriate padding and randomizing (so-called \emph{CCA2-conversion}) that protects against \emph{semantic attacks}, i.e., attacks where the plaintext is obtained from the systematic part.
	\begin{table}[h!]
		\setlength{\tabcolsep}{2pt}
		\def\arraystretch{1.1}
		\begin{center}
			\begin{tabular}{c|c|c|c|c|c|c|c|c|c}
				\hline
				\hline
				SL & \multirow{2}{*}{$q$} & \multirow{2}{*}{$m$} & \multirow{2}{*}{Method} & \multirow{2}{*}{$r$} & \multirow{2}{*}{$n$} & \multirow{2}{*}{$k$} & $t$  & \multirow{2}{*}{\footnotesize Rate} & Key size\\
				$[$bits$]$ & & &  &  &  &  & $(\ell,\tpub,\dE)$ & & [Bytes]\\
				\hline
				\multirow{7}{*}{$128$}& {$2$} & {$12$} & U.~D. & {$70$} & $2800$ & $1960$ & $70$ & $0.70$ & $205\ 800$ \\
                \cline{2-10}
				& \multirow{2}{*}{$3$} & \multirow{2}{*}{$8$} & U.~D. & \multirow{2}{*}{$100$} & $2420$ & $1620$ & $75$ & $0.67$ & $256\ 763$ \\
				& & & \textbf{Int.} & & $2130$ & $1330$ & $(7,131,84)$ & $0.62$ & $\mathbf{210\ 800}$ \\
				\cline{2-10}
				& \multirow{2}{*}{$4$} & \multirow{2}{*}{$6$} & U.~D. & \multirow{2}{*}{$90$} & $2150$ & $1610$ & $60$ & $0.75$ & $217\ 350$ \\
				& & & \textbf{Int.} & & $1580$ & $1040$ & $(7,105,76)$ & $0.66$ & $\mathbf{140\ 400}$ \\
				\cline{2-10}
				& \multirow{2}{*}{$5$} & \multirow{2}{*}{$5$} & U.~D. & \multirow{2}{*}{$100$} & $1800$ & $1380$ & $62$ & $0.74$ & $200\ 266$ \\
				& & & \textbf{Int.} & & $1290$ & $790$ & $(7,109,84)$ & $0.61$ & $\mathbf{114\ 646}$ \\
				\hline
% 				\cline{1-10}
				\multirow{7}{*}{$256$}& {$2$} & {$13$} & U.~D. & {$120$} & $6740$ & $5180$ & $120$ & $0.77$ & $1\ 010\ 100$ \\
				\cline{2-10}
				& \multirow{2}{*}{$3$} & \multirow{2}{*}{$8$} & U.~D. & \multirow{2}{*}{$180$} & $5100$ & $3660$ & $135$ & $0.72$ & $1\ 044\ 173$ \\
				& & & \textbf{Int. }& & $4300$ & $2860$ & $(7,236,156)$ & $0.67$ & $\mathbf{815\ 939}$ \\
				\cline{2-10}
				& \multirow{2}{*}{$4$} & \multirow{2}{*}{$7$} & U.~D. & \multirow{2}{*}{$240$} & $4880$ & $3200$ & $160$ & $0.66$ & $1\ 344\ 000$ \\
				& & & \textbf{Int. }&  & $3760$ & $2080$ & $(7,280,208)$ & $0.55$ & $\mathbf{873\ 600}$ \\
				\cline{2-10}
				& \multirow{2}{*}{$5$}& \multirow{2}{*}{$6$} & U.~D. & \multirow{2}{*}{$200$} & $4690$ & $3490$ & $125$ & $0.74$ & $1\ 215\ 530$ \\
				& & & \textbf{Int.} & & $3200$ & $2000$ & $(7,218,171)$ & $0.63$ & $\mathbf{696\ 578}$ \\
				\hline
				\hline
			\end{tabular}
		\end{center}
		\tiny{\ \ \ \ U.D.~$=$ Unique Decoding~\cite{McE78,WildBern2011}. \quad Int.~$=$ Interleaved Decoding (this paper).}
		\caption{Key size of repaired interleaved McEliece and wild McEliece for 128 and 256-bits security level (determined by ISD algorithm over $\F_q$~\cite{interlando2018generalization}). }\label{tab:keysize}
	\end{table}

	\vspace{-0.8cm}
	\section*{Acknowledgment}	
	
	The authors would like to thank Jean-Pierre Tillich for his comments that helped improve the quality of this work.

	\bibliographystyle{IEEEtran}
	\bibliography{main}
	
\end{document}

%%% Local Variables:
%%% mode: latex
%%% TeX-master: t
%%% End: